\documentclass[journal]{IEEEtran}
\usepackage{amsmath,amssymb,amsthm,graphicx,cite}
\usepackage{bbm}
\usepackage{subcaption}
\newtheorem{theorem}{Theorem}
\newtheorem{proposition}{Proposition}

\newtheorem{remark}{Remark}
\newtheorem{assumption}{Assumption}

\begin{document}

\title{Battery-Aware Rate-Splitting Multiple Access for Solar-Powered Cell-Free LEO Satellite Networks}
\author{Wali Ullah Khan, Muhammad Adil\thanks{Wali Ullah Khan is with the SnT, University of Luxembourg, Luxembourg (email: waliullahkhan30@gmail.com). Muhammad Adil is with the Department of Electronics Engineering, University of Rome Tor Vergata, 00133 Rome, Italy (email: muhammad.adil@uniroma2.it).

}}

\maketitle

\begin{abstract}
Cell-free low Earth orbit (LEO) satellite downlinks improve coverage and macro-diversity, but intermittent solar harvesting and finite onboard batteries can make communication-only resource allocation energy-aggressive. We develop battery-aware one-layer rate-splitting multiple access (BA-RSMA) for a fixed-cluster solar-powered cell-free LEO downlink. A perturbed physical-battery Lyapunov queue couples common/private power allocation to stored energy, while robust energy causality protects against bounded harvesting uncertainty and allows harvest curtailment at battery saturation. Under a fixed-precoder scalar effective-channel model, dual and quadratic transforms yield convex fixed-auxiliary resource-allocation subproblems and monotonic alternating updates. Numerical cross-validation against a direct nonlinear-programming reference gives a maximum relative objective gap below $2.8\times10^{-7}$ over 12 representative slot--state instances. In a 20-seed battery-stressed experiment, BA-RSMA achieves 1.631~Mbit/J and reduces mean near-depletion from 24.84\% under Myopic-RSMA to 6.82\%, at only a 1.7\% EE penalty.
\end{abstract}

\begin{IEEEkeywords}
RSMA, cell-free networks, LEO, energy harvesting, battery-aware resource allocation, Lyapunov optimization.
\end{IEEEkeywords}

\section{Introduction}

User-centric cell-free non-terrestrial networks (NTNs) enable each user equipment (UE) to be jointly served by a geometry-dependent cluster of low Earth orbit (LEO) satellites, thereby improving macro-diversity and mobility robustness relative to satellite-centric multi-beam operation \cite{kim2025cellfree}. Existing satellite resource-allocation designs commonly impose fixed transmit-power budgets, although a LEO platform harvests solar energy only during illuminated orbital intervals and supplies its communication and platform loads from a finite battery \cite{wertz1999space}. In parallel, rate-splitting multiple access (RSMA) provides a flexible interference-management mechanism by superposing a common stream, decoded by all users, and user-specific private streams \cite{khan2023rsma,11397682}. Lyapunov drift-plus-penalty control offers a natural means of coupling such per-slot communication decisions to long-term battery operation without requiring noncausal harvesting information \cite{neely2010stochastic,huang2013energyharvesting}.

Recent cell-free LEO studies have addressed stochastic-geometry performance~\cite{li2026downlink}, multi-satellite macro-diversity~\cite{dandrea2025macrodiversity}, and beamforming/resource allocation~\cite{gao2026wsr}. Energy-aware satellite designs have considered satellite--UAV and hybrid satellite--terrestrial cell-free architectures~\cite{tran2024satelliteuav,thu2026loadbalancing}, Lyapunov-based stochastic resource allocation in integrated satellite--terrestrial networks~\cite{koutsioumpa2025lyapunov}, and energy-constrained LEO edge computing under long-term energy limits~\cite{cheng2025energy}. RSMA has also been investigated in integrated satellite--terrestrial cell-free systems~\cite{zhang2025rsma} and energy-efficient hybrid satellite--terrestrial networks with energy harvesting~\cite{mittal2026resource}. However, these works do not explicitly couple one-layer RSMA common/private power allocation to the physical finite-battery state of the solar-powered serving LEO satellites under bounded harvesting uncertainty. An energy-efficient transmission decision can therefore remain aggressive before or during eclipse intervals.

Motivated by this gap, we develop BA-RSMA for a fixed-cluster solar-powered cell-free LEO downlink. The controller couples common/private RSMA power allocation to the measured battery reserve. Satellite association and beam directions are fixed so that coherent multi-satellite transmission reduces to scalar power allocation over an effective channel, making communication load affine in aggregate transmit power.

The contributions are threefold. First, we formulate battery-aware one-layer RSMA with finite batteries, QoS constraints, and robust energy causality under bounded harvesting error. Second, a perturbed-battery Lyapunov controller is combined with dual and quadratic transforms, yielding convex fixed-auxiliary subproblems and a monotonic alternating algorithm with stationary-point convergence guarantees. Third, we quantify the EE--reserve trade-off and compare BA-RSMA with an identical Myopic-RSMA ablation ($Z_i=0$), reduced-order BA-RSMA, NOMA, and OMA, while cross-validating the proposed alternating solver against a direct nonlinear-programming reference.

\textit{Organization and notation:} Section~II presents the system and battery models; Section~III formulates the battery-aware EE problem and develops the proposed alternating solution; Section~IV reports the numerical results; and Section~V concludes the paper. Scalars are italic, vectors are bold lowercase, and sets are calligraphic uppercase. $(\cdot)^{\mathsf H}$, $\|\cdot\|$, $\mathbb{E}[\cdot]$, and $[x]^+\triangleq\max\{x,0\}$ denote Hermitian transpose, Euclidean norm, expectation, and the positive-part operator, respectively.

\begin{figure}[t]
\centering
\includegraphics[width=0.8\columnwidth]{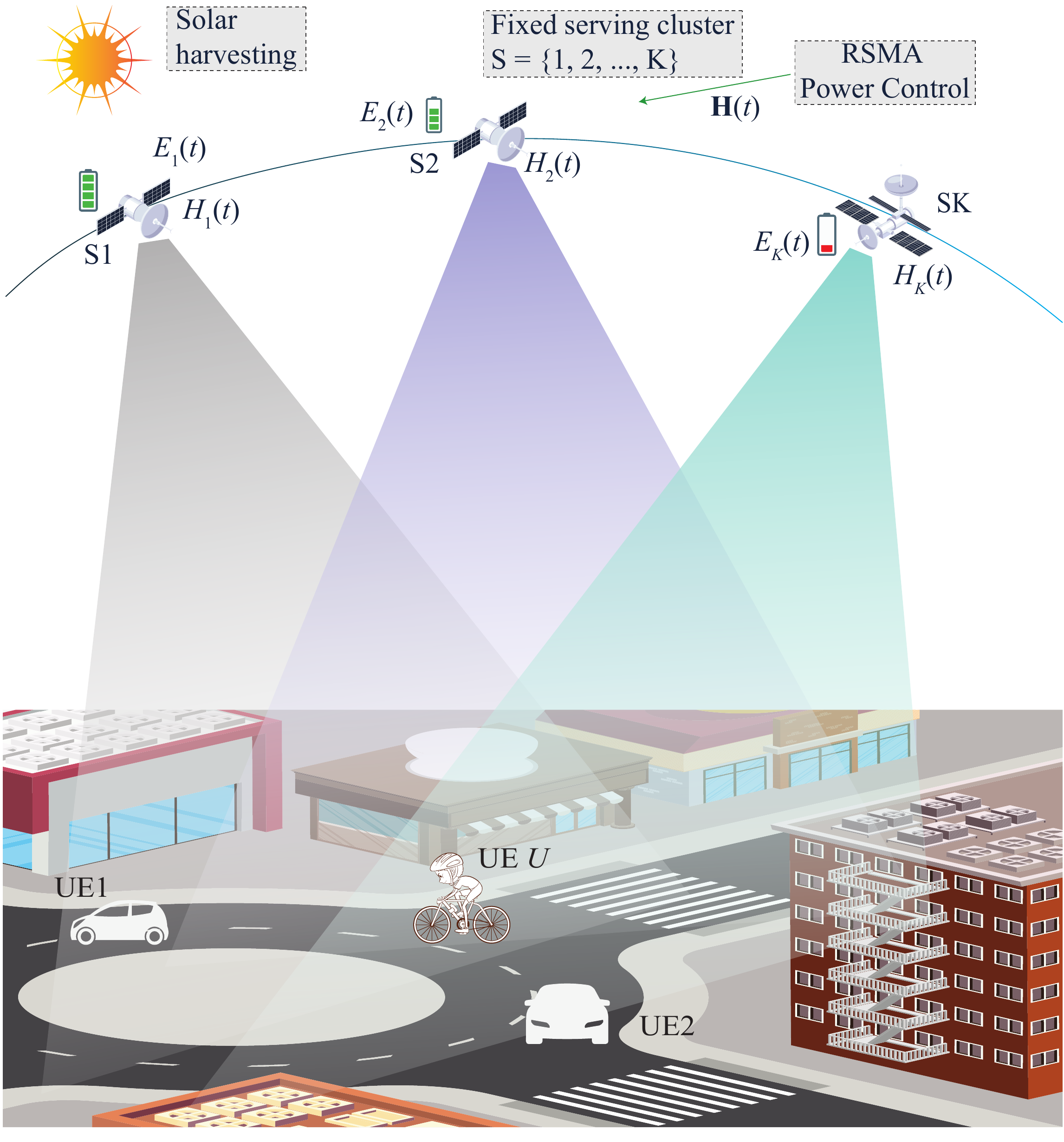}
\caption{System model of battery-aware cell-free LEO network.}
\label{fig:SM}
\end{figure}

\section{System Model}

\subsection{Network and One-Layer RSMA Model}
We consider a downlink cell-free LEO network in which a fixed
cluster of $K$ satellites, indexed by
$\mathcal{S}=\{1,\ldots,K\}$, jointly serves $U$ single-antenna
user equipments (UEs), indexed by
$\mathcal{J}=\{1,\ldots,U\}$. As illustrated in Fig.~\ref{fig:SM}, the satellites experience alternating
sunlit and eclipse conditions and jointly transmit to the served
UEs over the same time--frequency resource. Cluster membership is fixed over the control horizon; no clustering variable appears in the per-slot optimization. With fixed beam directions, coherent multi-satellite transmission is represented by
\begin{equation}
h_j(t) \triangleq \Big|\textstyle\sum_{i\in\mathcal S}
\sqrt{\xi_i\beta_{i,j}(t)\rho_{i,j}(t)}\,
\mathbf a_{i,j}^{\mathsf H}(t)\mathbf v_{i,j}(t)\Big|^2,
\label{eq:effective_gain}
\end{equation}
where $\beta_{i,j}$ is the large-scale power gain, $\kappa_{i,j}$ the Rician $K$-factor, $\rho_{i,j}=\kappa_{i,j}/(1+\kappa_{i,j})$ the LOS power fraction, $\mathbf a_{i,j}$ the normalized steering vector, and $\mathbf v_{i,j}$ a fixed unit-norm precoder. Timing, Doppler, and carrier-phase offsets are assumed compensated sufficiently for coherent transmission within each slot. The desired and interfering streams at UE $j$ therefore share $h_j(t)$ in \eqref{eq:common_sinr}--\eqref{eq:private_sinr}; this is a cluster-level power-control abstraction rather than a general multiuser beamforming model.

\textit{One-layer RSMA:} 
Under the resulting scalar channel model, the cluster employs
one-layer RSMA, transmitting one common stream decoded by
every UE via SIC together with $U$ private streams. Let
$p_c(t)\geq0$ and $p_j(t)\geq0$ denote the common- and
private-stream transmit powers, respectively. The common- and private-stream SINRs are
\begin{align}
\Gamma_{c,j}(t)
&=
\frac{p_c(t)h_j(t)}
{\sum_{k}p_k(t)h_j(t)+\sigma^2},
\label{eq:common_sinr}\\
\Gamma_{p,j}(t)
&=
\frac{p_j(t)h_j(t)}
{\sum_{k\neq j}p_k(t)h_j(t)+\sigma^2}.
\label{eq:private_sinr}
\end{align}
Since the common message must be decodable by every UE, its rate is capped by the weakest user and split into non-negative per-user portions $t_j(t)$:
\begin{equation}
\begin{aligned}
\sum_j t_j(t)
&\le
\log_2\!\left(1+\min_k\Gamma_{c,k}(t)\right),\\
R_j(t)
&=
t_j(t)+\log_2\!\left(1+\Gamma_{p,j}(t)\right).
\end{aligned}
\label{eq:rsma_rates}
\end{equation}
This single shared common stream constitutes the one-layer RSMA structure considered throughout the paper.

\subsection{Energy Model}
Each satellite $i\in\mathcal{S}$ draws a fixed, predetermined share $\xi_i\geq0$ ($\sum_i\xi_i=1$) of the cluster's total radiated power. Accordingly, the coherent amplitude contributed by satellite $i$ carries the factor $\sqrt{\xi_i}$ in the effective gain above, while its RF communication load remains affine in the aggregate stream powers,
\begin{equation}
e_i(t) = \frac{\xi_i}{\eta_i}\Big(p_c(t)+\textstyle\sum_j p_j(t)\Big),
\end{equation}
with $\eta_i$ the amplifier efficiency. The fixed weight $\xi_i$ is computed offline from nominal service-region geometry and long-term statistics. With $j\sim\mathcal D$ denoting the design distribution of served terminals,
\begin{equation}
\xi_i = \frac{\mathbb E_{j\sim\mathcal D}[\beta_{i,j}\rho_{i,j}|\mathbf a_{i,j}^{\mathsf H}\mathbf v_{i,j}|^2]}
{\sum_{\ell\in\mathcal S}\mathbb E_{j\sim\mathcal D}[\beta_{\ell,j}\rho_{\ell,j}|\mathbf a_{\ell,j}^{\mathsf H}\mathbf v_{\ell,j}|^2]}.
\label{eq:power_share}
\end{equation}
Thus \eqref{eq:power_share} is causal and should be viewed as a fixed power-sharing proxy, not an exact decomposition of the coherent gain because $|\sum_i a_i|^2$ contains cross terms. Optimizing $\xi_i$ per slot would reintroduce joint satellite-level power allocation and is outside the present scalar scope.

Let $H_i(t)\geq0$ denote the harvestable solar power at satellite $i$ during slot $t$ ($H_i(t)=0$ in eclipse), $\theta_i\in(0,1)$ the charging efficiency, $\Delta$ the slot duration, $P_i^{\mathrm{bus}}(t)$ the non-communication (attitude-control, telemetry, and housekeeping) load, and $P_i^{\mathrm{fix}}$ the fixed communication/circuit load. The battery-supplied load is therefore $P_i^{\mathrm{load}}(t)=P_i^{\mathrm{bus}}(t)+P_i^{\mathrm{fix}}+e_i(t)$. Rate-dependent onboard forwarding energy is neglected; the model retains the RF power, fixed communication circuitry, and dominant platform bus load. To avoid the unphysical requirement that a satellite must transmit solely to prevent battery overflow, we allow excess harvested energy to be curtailed when the battery is full. The physical battery recursion is therefore
\begin{equation}
E_i(t+1)=\min\!\left\{E_i^{\max},\;E_i(t)+\Delta\big(\theta_iH_i(t)-P_i^{\mathrm{load}}(t)\big)\right\}.
\label{eq:battrec}
\end{equation}
The saturation in \eqref{eq:battrec} represents standard harvest curtailment/spillage and is not a post-hoc correction of battery depletion. With $H_i$ and $P_i^{\mathrm{load}}$ in watts, $\Delta$ must use the corresponding time base for the stored energy; in the numerical study, battery energy is represented in Wh and the slot duration is converted to hours when multiplying power in watts.

\begin{assumption}\label{as:causal}
At the start of slot $t$, the controller knows $E_i(t)$, the current scalar effective channel gains $\{h_j(t)\}_{j\in\mathcal J}$, and a causal estimate $\hat H_i(t)$ of the harvestable power, with $|\hat H_i(t)-H_i(t)|\leq\delta_{H,i}$ for a known bound $\delta_{H,i}\geq0$. The true update \eqref{eq:battrec} uses the realized $H_i(t)$; only the controller's action uses $\hat H_i(t)$.
\end{assumption}

To robustly prevent depletion, the implemented load must satisfy
\begin{equation}
\Delta P_i^{\mathrm{load}}(t)
\leq E_i(t)-E_i^{\min}
+\Delta\theta_i\big[\hat H_i(t)-\delta_{H,i}\big]^+.
\label{eq:causal1}
\end{equation}
Equivalently, the communication-power component obeys
\begin{align}
e_i(t)\leq e_i^{\max}(t)\triangleq{}&
\frac{E_i(t)-E_i^{\min}}{\Delta}
+\theta_i\big[\hat H_i(t)-\delta_{H,i}\big]^+\notag\\
&-P_i^{\mathrm{bus}}(t)-P_i^{\mathrm{fix}}.
\label{eq:emax}
\end{align}
where slots for which $e_i^{\max}(t)<0$ are infeasible even at zero communication power; throughout, the combined bus and fixed communication loads are assumed to be energy-feasible even at zero radiated power.

\begin{proposition}\label{prop:safety}
If $E_i(0)\in[E_i^{\min},E_i^{\max}]$, Assumption~\ref{as:causal} holds, and every implemented action satisfies \eqref{eq:causal1}, then $E_i(t)\in[E_i^{\min},E_i^{\max}]$ for all $t$ for every harvest realization satisfying the stated error bound.
\end{proposition}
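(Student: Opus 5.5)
We argue by induction on $t$, with base case $E_i(0)\in[E_i^{\min},E_i^{\max}]$ given by hypothesis. For the inductive step, we assume $E_i(t)\in[E_i^{\min},E_i^{\max}]$ and establish the upper and lower bounds on $E_i(t+1)$ separately from the recursion \eqref{eq:battrec}.

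The upper bound comes directly from the saturation. Since $E_i(t+1)$ is the minimum of $E_i^{\max}$ and another quantity, $E_i(t+1)\le E_i^{\max}$ for every harvest realization and every action. No assumption on the load or on $H_i(t)$ is used here.

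For the lower bound, we first lower-bound the realized harvest. By Assumption~\ref{as:causal}, $H_i(t)\ge \hat H_i(t)-\delta_{H,i}$. The realized power is also nonnegative, $H_i(t)\ge 0$. Together these give $H_i(t)\ge[\hat H_i(t)-\delta_{H,i}]^+$, and since $\theta_i>0$ and $\Delta>0$,
\begin{equation*}
\Delta\theta_iH_i(t)\ge \Delta\theta_i\big[\hat H_i(t)-\delta_{H,i}\big]^+ .
\end{equation*}
Combining this with \eqref{eq:causal1} bounds the unsaturated update:
\begin{equation*}
E_i(t)+\Delta\theta_iH_i(t)-\Delta P_i^{\mathrm{load}}(t)\ \ge\ E_i(t)+\Delta\theta_i\big[\hat H_i(t)-\delta_{H,i}\big]^+-\Delta P_i^{\mathrm{load}}(t)\ \ge\ E_i^{\min}.
\end{equation*}
It remains to handle the minimum in \eqref{eq:battrec}. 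Both of its arguments are at least $E_i^{\min}$: the first because $E_i^{\max}\ge E_i^{\min}$, the second by the display above. Hence $E_i(t+1)\ge E_i^{\min}$, which closes the induction.

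The argument holds for every realization sequence $\{H_i(t)\}$ satisfying the error bound, because the only fact used about $H_i(t)$ is $H_i(t)\ge[\hat H_i(t)-\delta_{H,i}]^+$. The one step requiring care is the positive part in \eqref{eq:causal1}. A bare bound $\hat H_i-\delta_{H,i}$ could be negative, and it is only the physical nonnegativity of harvest that justifies replacing it by $[\hat H_i-\delta_{H,i}]^+$; this is why the proof invokes $H_i(t)\ge0$ explicitly. Finally, the statement presupposes that a feasible implemented action exists in each slot. The proposition does not itself establish this; it is supplied by the standing assumption that $e_i^{\max}(t)\ge0$.
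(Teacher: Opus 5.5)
Your proposal is correct and follows the paper's proof essentially step for step. Both argue by induction on $t$, read the upper bound directly off the saturation in \eqref{eq:battrec}, and obtain the lower bound by combining $\theta_iH_i(t)\ge\theta_i[\hat H_i(t)-\delta_{H,i}]^+$ (from the error bound and $H_i(t)\ge0$) with \eqref{eq:causal1}, then noting that taking the minimum with $E_i^{\max}$ cannot push the state below $E_i^{\min}$.
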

\begin{proof}
Suppose $E_i(t)\in[E_i^{\min},E_i^{\max}]$. Since $H_i(t)\geq\hat H_i(t)-\delta_{H,i}$ and $H_i(t)\geq0$, we have $\theta_iH_i(t)\geq\theta_i[\hat H_i(t)-\delta_{H,i}]^+$. Hence \eqref{eq:causal1} gives
\[
E_i(t)+\Delta\big(\theta_iH_i(t)-P_i^{\mathrm{load}}(t)\big)\geq E_i^{\min}.
\]
Taking the minimum with $E_i^{\max}$ in \eqref{eq:battrec} cannot reduce the state below $E_i^{\min}$ and directly enforces the upper bound. Induction from the feasible initial state completes the proof.
\end{proof}

\section{Problem Formulation and Proposed Solution}

\subsection{Communication-Side Energy Efficiency}
We use the communication-side effective EE
\begin{equation}
\Xi(t)\triangleq \frac{W\sum_jR_j(t)}{P_0+\sum_i e_i(t)},
\label{eq:ee}
\end{equation}
where $P_0=\sum_iP_i^{\mathrm{fix}}$ collects the same fixed communication/circuit loads that enter the physical battery dynamics. The spacecraft bus load remains in the battery dynamics through $P_i^{\mathrm{bus}}(t)$ but is excluded from \eqref{eq:ee}; thus $\Xi(t)$ is a communication-side EE metric rather than total spacecraft EE.

\subsection{Perturbed Battery Queue and Per-Slot Problem}
Let $\Omega_i\in(E_i^{\min},E_i^{\max})$ and $Z_i(t)\triangleq\Omega_i-E_i(t)$. Let $L(t)=\frac12\sum_iZ_i^2(t)$ and define the one-slot Lyapunov change as $\delta_L(t)\triangleq L(t+1)-L(t)$. From \eqref{eq:battrec}, battery saturation can only reduce the squared magnitude of the unsaturated queue update; hence, for bounded harvested/load powers,
\begin{align}
\mathbb E[\delta_L(t)\mid Z(t)]\leq B
+\Delta\sum_i Z_i(t)\,&\mathbb E[ P_i^{\mathrm{load}}(t)\notag\\
&-\theta_iH_i(t)\mid Z(t)].
\label{eq:driftbound}
\end{align}
for a finite action-independent $B$. Let the required per-user QoS floors satisfy $R_j^{\min}\geq0$. Dropping the action-independent harvest term from the drift-minus-utility bound gives
\begin{subequations}\label{eq:p1}
\begin{align}
\max_{p_c,\{p_j\},\{t_j\},r_c}\quad
& V\,\Xi(t)-\Delta\sum_i Z_i(t)P_i^{\mathrm{load}}(t)
\label{eq:p1_obj}\\
\mathrm{s.t.}\quad
& R_j(t)\geq R_j^{\min},\quad \forall j,
\label{eq:p1_qos}\\
& 0\leq e_i(t)\leq e_i^{\max}(t),\quad \forall i,
\label{eq:p1_energy}\\
& \sum_j t_j(t)\leq r_c(t),
\label{eq:p1_common}\\
& r_c(t)\leq \log_2\!\bigl(1+\Gamma_{c,k}(t)\bigr),\quad \forall k,
\label{eq:p1_decode}\\
& p_c(t),\,p_j(t),\,t_j(t)\geq0.
\label{eq:p1_nonneg}
\end{align}
\end{subequations}
where $e_i^{\max}(t)$ is given by \eqref{eq:emax}. The objective in~\eqref{eq:p1_obj} balances the instantaneous
communication-side EE against the battery-state-dependent cost of energy
consumption through the perturbed queues $Z_i(t)$. Constraint
\eqref{eq:p1_qos} guarantees the minimum QoS requirement of each UE,
whereas \eqref{eq:p1_energy} enforces the battery-dependent
communication-power limit at each serving satellite. Constraint
\eqref{eq:p1_common} ensures that the aggregate common-rate allocation
does not exceed the available common-stream rate, while
\eqref{eq:p1_decode} guarantees that the common stream is decodable by
every UE in the serving cluster. Finally, \eqref{eq:p1_nonneg} imposes
non-negativity on all power and common-rate allocation variables.

\subsection{Fixed-Auxiliary Convex Reformulation}
Problem~\eqref{eq:p1} is nonconvex due to the coupled SINR
ratios in the common/private RSMA rates and the fractional
communication-side EE objective. To obtain a tractable per-slot
solution, we first reformulate the rate expressions using the
Lagrangian-dual and quadratic transforms, and then apply a second
quadratic transform to the EE ratio. This yields a convex
fixed-auxiliary resource-allocation subproblem. For a generic ratio $S/I$ with $S\geq0$ and $I>0$, the
Lagrangian-dual transform followed by the scalar quadratic transform
\cite{shen2018fractional} gives the exact variational representation
\begin{align}
\log_2\!\left(1+\frac{S}{I}\right)
&=\max_{\gamma\geq0,\,y\geq0}
\log_2(1+\gamma)-\frac{\gamma}{\ln2}\notag\\
&+\frac{1+\gamma}{\ln2}\left[2y\sqrt{S}-y^2(S+I)\right],
\label{eq:generic_transform}
\end{align}
whose maximizers are $\gamma^\star=S/I$ and $y^\star=\sqrt{S}/(S+I)$. Thus, for the common stream of UE $j$, define
\begin{align}
\hat R_{c,j}
=&\log_2(1+\gamma_{c,j})-\frac{\gamma_{c,j}}{\ln2}\notag\\
&+\frac{1+\gamma_{c,j}}{\ln2}
\Big[2y_{c,j}\sqrt{S_{c,j}}
-y_{c,j}^2(S_{c,j}+I_{c,j})\Big].
\label{eq:surrogate}
\end{align}
where $S_{c,j}=p_ch_j$ and $I_{c,j}=h_j\sum_kp_k+\sigma^2$. The private-stream expression is analogous, with $S_{p,j}=p_jh_j$ and $I_{p,j}=h_j\sum_{k\neq j}p_k+\sigma^2$. For arbitrary auxiliaries these expressions lower-bound the exact rates, with equality and first-order tightness at the closed-form maximizers. For fixed auxiliaries they are concave in $p$, so the common-rate constraints use the convex hypographs
\begin{equation}
r_c\leq \hat R_{c,k}(p,\gamma_{c,k},y_{c,k}),\qquad \forall k.
\label{eq:common_hypo}
\end{equation}
and the QoS constraints use $t_j+\hat R_{p,j}\geq R_j^{\min}$. Since $R_j^{\min}\geq0$, every transformed feasible point has $\hat N\triangleq W\sum_j(t_j+\hat R_{p,j})\geq0$. Let $D=P_0+\sum_ie_i$. A second quadratic transform gives
\begin{equation}
\frac{\hat N(p,t)}{D(p)}
=\max_{\nu\geq0}\;2\nu\sqrt{\hat N(p,t)}-\nu^2D(p),
\label{eq:ee_transform}
\end{equation}
with $\nu^\star=\sqrt{\hat N}/D$ whenever $\hat N\geq0$.

\begin{proposition}\label{prop:convex}
Fix $\{\gamma_{c,j},y_{c,j},\gamma_{p,j},y_{p,j}\}$ and $\nu$. Maximizing
\begin{equation}
V\left[2\nu\sqrt{\hat N(p,t)}-\nu^2D(p)\right]
-\Delta\sum_i Z_i(t)P_i^{\mathrm{load}}(p)
\label{eq:fixed_aux_obj}
\end{equation}
subject to \eqref{eq:common_hypo}, $t_j+\hat R_{p,j}\geq R_j^{\min}$, \eqref{eq:emax}, $\sum_jt_j\leq r_c$, and nonnegativity is a convex optimization (equivalently, maximization of a concave objective over a convex feasible set).
\end{proposition}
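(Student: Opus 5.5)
The plan is to check concavity of the objective and convexity of each constraint separately, using the decision vector $x=(p_c,\{p_j\},\{t_j\},r_c)$ and treating all auxiliaries as constants.

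\emph{Step 1: the transformed rates are concave.} With $\gamma\ge0$ and $y$ fixed, each $\hat R_{c,j}$ and $\hat R_{p,j}$ in \eqref{eq:surrogate} equals a constant plus the positive multiple $\frac{1+\gamma}{\ln 2}$ of
\[
2y\sqrt{S}-y^2(S+I).
\]
Here $S=p_ch_j$ or $S=p_jh_j$ is linear and nonnegative in $p$. Hence $\sqrt{S}$ is concave, being the concave nondecreasing function $\sqrt{\cdot}$ composed with an affine map. Since $y\ge0$, the term $2y\sqrt S$ is concave. The term $S+I$ is affine in $p$, because $I_{c,j}$ and $I_{p,j}$ are affine with $h_j\ge0$ and $\sigma^2>0$. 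So $-y^2(S+I)$ is affine. It follows that $\hat R_{c,j}$ and $\hat R_{p,j}$ are concave in $p$ and do not depend on $(t,r_c)$.

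\emph{Step 2: the constraints define a convex set.}
\begin{itemize}
\item The constraint $r_c-\hat R_{c,k}\le0$ in \eqref{eq:common_hypo} has a convex left-hand side, so its sublevel set is convex.
\item The QoS constraint $R_j^{\min}-t_j-\hat R_{p,j}\le 0$ likewise has a convex left-hand side.
\item The energy constraint \eqref{eq:emax} with \eqref{eq:p1_energy} is linear, since $e_i=\frac{\xi_i}{\eta_i}(p_c+\sum_jp_j)$ is linear in $p$ and $e_i^{\max}(t)$ is a slot constant.
\item The constraint $\sum_jt_j\le r_c$ and the nonnegativity constraints are linear.
\end{itemize}
An intersection of convex sets is convex, so the feasible set is convex.

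\emph{Step 3: the objective \eqref{eq:fixed_aux_obj} is concave.}
\begin{itemize}
\item $\hat N=W\sum_j(t_j+\hat R_{p,j})$ is concave by Step 1, since $W>0$.
\item The paper already notes that $\hat N\ge0$ on the feasible set, because $t_j+\hat R_{p,j}\ge R_j^{\min}\ge0$.
\item On this set, $\sqrt{\hat N}$ is the composition of the concave, nondecreasing map $u\mapsto\sqrt u$ on $[0,\infty)$ with a concave function, so it is concave.
\item With $\nu\ge0$ and $V\ge0$, the term $2V\nu\sqrt{\hat N}$ is concave.
\item $-V\nu^2D(p)$ is affine.
\item $-\Delta\sum_iZ_i(t)P_i^{\mathrm{load}}(p)$ is affine, since $P_i^{\mathrm{load}}$ is affine in $p$. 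The sign of $Z_i(t)$ does not matter here, so no condition on $Z_i$ is needed.
\end{itemize}
A sum of concave and affine terms is concave.

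\emph{Main obstacle: the domain of $\sqrt{\hat N}$.} The composition rule is only valid where $\hat N\ge0$. I will handle this by restricting the objective to the feasible set, which is convex by Step 2 and on which $\hat N\ge0$. Alternatively, I can extend $\sqrt{\cdot}$ by $-\infty$ for negative arguments; that extension is still concave and nondecreasing as an extended-value function, so the composition rule applies. I also need $\sqrt{S}$ to be well defined, and this holds because $S\ge0$ under nonnegativity.

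Combining Steps 2 and 3, the problem maximizes a concave objective over a convex set, which proves Proposition~\ref{prop:convex}.
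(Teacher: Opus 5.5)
Your proof is correct and follows the paper's own argument: the fixed-auxiliary rates and $\sqrt{\hat N}$ are concave, $D$ and $P_i^{\mathrm{load}}$ are affine, and the rate and energy constraints define a convex feasible set; you simply give more detail. In particular, you make explicit two points the paper's one-line proof leaves implicit: the domain of $\sqrt{\hat N}$ is handled by restricting to the feasible set where $\hat N\ge 0$, and the sign conditions $y\ge 0$, $\nu\ge 0$, $V\ge 0$ are what make the composition steps valid.
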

\begin{proof}
For fixed auxiliaries, the transformed rates and $\sqrt{\hat N}$ are concave, whereas $D$ and $P_i^{\mathrm{load}}$ are affine; the rate and energy constraints define a convex feasible set.
\end{proof}

The BA-RSMA update repeats three steps within each slot: 1) evaluate the exact SINRs at the current power vector and set the common/private rate auxiliaries to their closed-form maximizers in \eqref{eq:generic_transform}; 2) set $\nu=\sqrt{\hat N}/D$; and 3) solve the fixed-auxiliary convex problem of Proposition~\ref{prop:convex}. Iteration stops when the relative change of the exact per-slot objective in \eqref{eq:p1} falls below a prescribed tolerance.

\begin{theorem}\label{thm:conv}
Fix the cluster, current battery state, and queue weights $\{Z_i(t)\}$. Suppose every fixed-auxiliary subproblem is solved exactly and the feasible set is nonempty and compact. Then the sequence of exact per-slot objective values in \eqref{eq:p1} generated by the alternating updates is non-decreasing and convergent. Under the usual constraint qualification, every accumulation point is a stationary/KKT point of the per-slot problem.
\end{theorem}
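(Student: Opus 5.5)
We prove the claim with the standard minorize--maximize (MM) argument for fractional programming. Write $F(x)$ for the exact per-slot objective in \eqref{eq:p1}, where $x=(p_c,\{p_j\},\{t_j\},r_c)$. Collect all auxiliaries in $\omega=(\{\gamma_{c,j},y_{c,j},\gamma_{p,j},y_{p,j}\},\nu)$ and write $G(x,\omega)$ for the fixed-auxiliary objective \eqref{eq:fixed_aux_obj}. The transformed constraint set $\mathcal X(\omega)$ is that of Proposition~\ref{prop:convex}. The key property is that the surrogate problem is a tight inner approximation of \eqref{eq:p1}. By \eqref{eq:generic_transform}, $\hat R_{c,k}\le\log_2(1+\Gamma_{c,k})$ and $\hat R_{p,j}\le\log_2(1+\Gamma_{p,j})$ for every $\omega$. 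Hence every $x\in\mathcal X(\omega)$ satisfies \eqref{eq:p1_qos}--\eqref{eq:p1_decode}, so $\mathcal X(\omega)$ is contained in the feasible set $\mathcal X$ of \eqref{eq:p1}.

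For the objective, we use that $V\ge 0$ and $\hat N\le N\triangleq W\sum_j R_j$. We also use $2\nu\sqrt{a}-\nu^2D\le a/D$ for all $\nu$, together with monotonicity of $\sqrt{\cdot}$. Together these give $G(x,\omega)\le F(x)$ on $\mathcal X(\omega)$.

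Let $x^{(n)}$ be the current iterate and $\omega^{(n)}$ the closed-form auxiliaries computed from it in steps 1--2. At these auxiliaries, all rate surrogates coincide with the exact rates, so $\hat N(x^{(n)})=N(x^{(n)})$. With $\nu^{(n)}=\sqrt{\hat N}/D$, this gives $G(x^{(n)},\omega^{(n)})=F(x^{(n)})$ and $x^{(n)}\in\mathcal X(\omega^{(n)})$. Step 3 returns $x^{(n+1)}\in\arg\max_{\mathcal X(\omega^{(n)})}G(\cdot,\omega^{(n)})$, so
\[
F(x^{(n+1)})\ge G(x^{(n+1)},\omega^{(n)})\ge G(x^{(n)},\omega^{(n)})=F(x^{(n)}).
\]
This proves monotonicity. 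The iterates also stay feasible for \eqref{eq:p1}.

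The objective $F$ is continuous on the compact feasible set: $D\ge P_0>0$, the rates are continuous, and the load term is affine. Hence $F$ is bounded above, and the monotone sequence $F(x^{(n)})$ converges. The initial point must be feasible for the auxiliaries to be well defined; we take this as part of the hypothesis.

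For stationarity, compactness gives a subsequence $x^{(n_m)}\to\bar x$. The auxiliary maps $x\mapsto\omega^\star(x)$ are continuous wherever $I>0$, which holds since $\sigma^2>0$, and wherever $D>0$. So $\omega^{(n_m)}\to\bar\omega=\omega^\star(\bar x)$. The standard MM limit argument then applies. For any $x'\in\mathcal X(\omega^{(n_m)})$ we have $G(x',\omega^{(n_m)})\le F(x^{(n_m+1)})\le\lim F$. Passing to the limit, using continuity of $G$ and of the constraint functions in $\omega$, shows that $\bar x$ maximizes $G(\cdot,\bar\omega)$ over $\mathcal X(\bar\omega)$. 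The step that needs care is that the moving constraint set $\mathcal X(\omega)$ converges in the appropriate (Kuratowski) sense. We would justify it through the constraint qualification, e.g.\ Slater for the convex subproblem, which makes the feasible-set map continuous.

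Finally, $\bar x$ is a KKT point of the convex subproblem. The surrogates are first-order tight at $\bar x$, both for the objective and for each constraint \eqref{eq:common_hypo} and $t_j+\hat R_{p,j}\ge R_j^{\min}$. Concretely, the envelope theorem at the maximizers in \eqref{eq:generic_transform} and \eqref{eq:ee_transform} gives equal values and gradients with respect to $p$. The subproblem's KKT system at $\bar x$ therefore coincides with that of \eqref{eq:p1}, written with the epigraph variable $r_c$ for the min-rate. Hence $\bar x$ is a KKT point of the per-slot problem.

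The main obstacle is the gradient-matching step. The square-root terms $\sqrt{S}=\sqrt{p h}$ are not differentiable at $p=0$. If a private or common power vanishes at $\bar x$, the first-order tightness must be argued through one-sided directional derivatives, or the KKT claim restricted to points with positive powers. We would first try to handle this with directional stationarity. Failing that, we would state it as part of the ``usual constraint qualification''.
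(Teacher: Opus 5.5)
Your proposal is essentially the paper's proof, expanded. It uses the same tight-minorizer chain $F(x^{(r)})=\widetilde F(x^{(r)},a^{(r)})\le\widetilde F(x^{(r+1)},a^{(r)})\le F(x^{(r+1)})$, boundedness for convergence, and first-order tightness plus a constraint qualification for KKT. You also make explicit two things the paper leaves implicit: the inner-approximation feasibility $\mathcal X(\omega)\subseteq\mathcal X$, and the limit over a moving constraint set.

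One correction to your contingency plan: directional stationarity will not rescue the zero-power case. At $\bar p_j=0$ the closed-form auxiliaries are $\gamma^\star_{p,j}=y^\star_{p,j}=0$, so $\hat R_{p,j}(\cdot,\bar\omega)\equiv 0$. The true one-sided derivative, however, is $\partial^{+}_{p_j}\log_2(1+\Gamma_{p,j})=h_j/(I_{p,j}\ln 2)>0$, and the same holds for $p_c$. Such a point can therefore be a fixed point of the updates without being a KKT point of \eqref{eq:p1}. The paper's one-line appeal to first-order tightness does not address this either. Restricting the stationarity claim to accumulation points with strictly positive stream powers, or building that into the hypotheses, is the correct fix rather than a weakness of your route.
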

\begin{proof}
With $F$ the exact objective and $\widetilde F$ its transformed lower bound, tight auxiliary updates and exact inner maximization give $F(x^{(r)})=\widetilde F(x^{(r)},a^{(r)})\leq\widetilde F(x^{(r+1)},a^{(r)})\leq F(x^{(r+1)})$. Boundedness gives convergence; first-order tightness and the stated constraint qualification yield KKT stationarity.
\end{proof}

\begin{remark}\label{rem:V}
Because $Z_i(t)$ is induced by the bounded physical battery, no unbounded virtual-queue $O(V)$ backlog claim is made. Here $V$ simply controls the per-slot EE--reserve weighting in \eqref{eq:p1}.
\end{remark}

\section{Numerical Results}

\subsection{Simulation Setup and Solver Evaluation}
We consider a $K=3$-satellite cluster serving $U=4$ UEs over $T=96$ slots in a 600-km-class LEO scenario. The carrier frequency and bandwidth are $20$~GHz and $20$~MHz, respectively; the system temperature is $290$~K, receiver noise figure is $5$~dB, and the satellite/UE antenna gains are $35/10$~dBi. The Rician factor is $10$~dB, giving $\rho=0.909$, and the nominal slant ranges vary smoothly between approximately $600$ and $1100$~km with slow $0.6$-dB shadowing.

\begin{figure}[t]
\centering
\begin{subfigure}{\columnwidth}
    \centering
    \includegraphics[width=\linewidth]{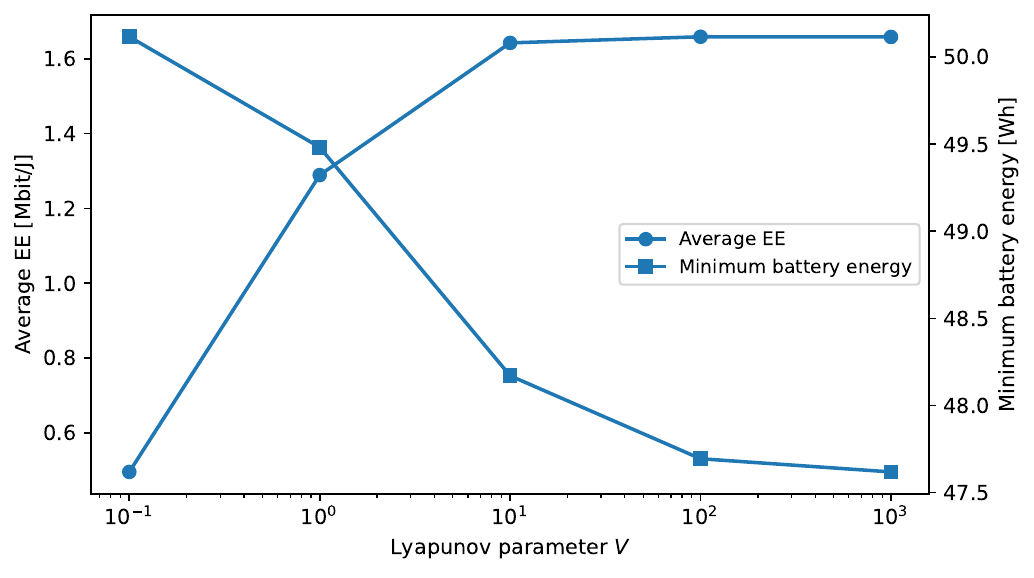}
    \caption{}
    \label{fig:eevsV_a}
\end{subfigure}


\begin{subfigure}{\columnwidth}
    \centering
    \includegraphics[width=\linewidth]{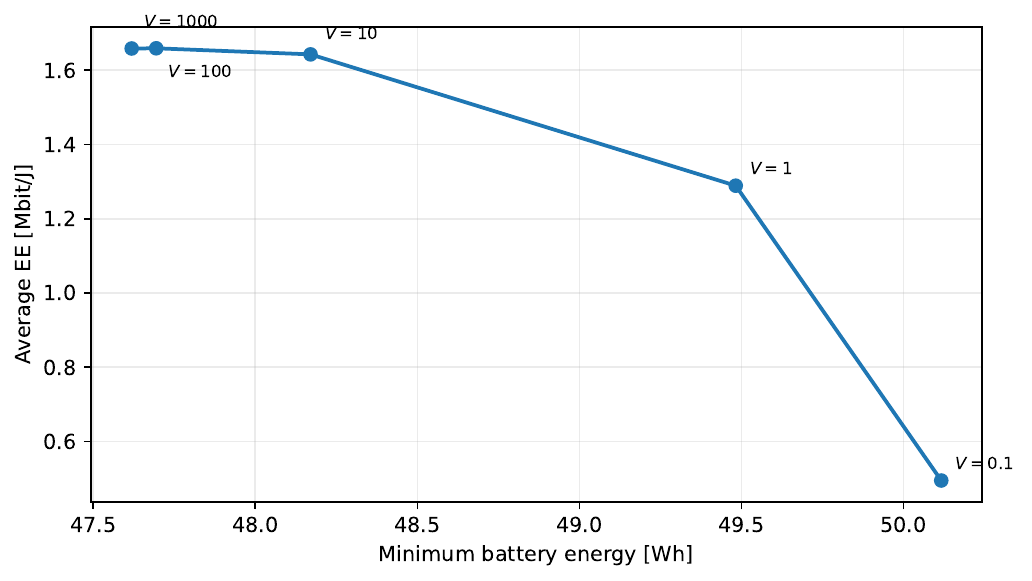}
    \caption{}
    \label{fig:eevsV_b}
\end{subfigure}
\caption{Effect of the Lyapunov parameter on BA-RSMA: (a) average communication-side EE and minimum battery energy versus $V$; (b) corresponding EE--battery-reserve operating points.}
\label{fig:eevsV}
\end{figure}

The fixed power shares are computed offline from the long-term design trajectory using \eqref{eq:power_share}, yielding approximately $\boldsymbol\xi=[0.323,0.333,0.344]$. We set $\eta_i=0.35$, $P_i^{\rm bus}=20$~W, $P_i^{\rm fix}=5$~W, $\theta_i=0.9$, $\Delta=30$~s $=1/120$~h in the battery updates, $E_i^{\max}/E_i^{\min}=100/20$~Wh, $\Omega_i=68$~Wh, $R_j^{\min}=0.1$~bit/s/Hz, a 35\% eclipse fraction, and bounded sunlit harvest uncertainty $\delta_{H,i}=5$~W. The nominal $V$-sweep starts from $E_i(0)=50$~Wh with $\widehat H_i=60$~W in sunlit slots; the battery-stressed comparison uses $E_i(0)=28$~Wh and $\widehat H_i=40$~W.

For every slot, \eqref{eq:emax} and the fixed power share $\xi_i$ imply
$p_c(t)+\sum_jp_j(t)\leq\min_{i\in\mathcal S}\{\eta_i e_i^{\max}(t)/\xi_i\}$. Thus, the tightest satellite limits the implemented power rather than the sum of the individual energy headrooms. The realized harvest obeys the bounded-error model of Assumption~1, and the true battery recursion uses \eqref{eq:battrec}; no lower-bound battery clipping is applied.

The BA-RSMA long-horizon curves are generated by the proposed dual-plus-quadratic-transform alternating algorithm, warm-started from the previous slot to reduce the per-slot iteration count. For these long-horizon campaigns, the warm-started solver uses a relative outer stopping tolerance of $10^{-5}$ and a 60-iteration safeguard. QoS and battery constraints are rechecked before accepting each solution. For numerical cross-validation, the original per-slot problem \eqref{eq:p1} is also solved directly by SLSQP on 12 representative stressed slot--state instances. With a $10^{-7}$ relative stopping tolerance and a 100-iteration safeguard in this validation, all 12 proposed-algorithm runs are monotonic and feasible, the maximum final constraint residual is $9.3\times10^{-9}$, and the maximum relative objective gap to the direct reference is $2.8\times10^{-7}$. The median and maximum outer iteration counts are 65 and 78, respectively; Fig.~\ref{fig:convergence} shows a representative trajectory. Myopic-RSMA uses the same continuous formulation with the sole change $Z_i(t)=0$ and is solved by the direct nonlinear-programming reference.

\subsection{Benchmark Models}
The primary BA-RSMA results are generated by the proposed alternating algorithm for \eqref{eq:p1}. Continuous Myopic-RSMA uses the same feasible set with $Z_i(t)=0$ and the direct nonlinear-programming reference, isolating the queue-aware battery term at the formulation level; the remaining schemes are reduced-dimensional benchmarks.

\emph{Reduced-order BA-RSMA:} For $(P,\beta_c)$, $p_c=\beta_cP$; the remaining $(1-\beta_c)P$ is split as $p_j=(1-\beta_c)P h_j^{-\alpha_p}/\sum_kh_k^{-\alpha_p}$, with $\beta_c$ searched over ten values in $[0.05,0.8]$ and $\alpha_p\in\{0,0.5,1,1.5,2,3\}$.

\emph{NOMA:} Users are ranked by channel gain (rank 1 strongest); user $k$ cancels all weaker-ranked signals via SIC before decoding its own, $\mathrm{SINR}_k=p_kh_{(k)}/(h_{(k)}\sum_{i<k}p_i+\sigma^2)$, with $p_k=P h_{(k)}^{-\alpha}/\sum_ih_{(i)}^{-\alpha}$ and the same exponent family.

\emph{OMA:} Equal resource shares $\tau_j=1/U$ are used with $p_j=P h_j^{-\alpha_o}/\sum_kh_k^{-\alpha_o}$ and $R_j=\tau_j\log_2(1+p_jh_j/(\tau_j\sigma^2))$.
For every scheme, $R_j(t)\geq R_j^{\min}$ and the battery-dependent power cap are checked at every candidate point; infeasible slots are counted as QoS outages.

\subsection{Results Discussion}
Fig.~\ref{fig:eevsV} quantifies the control role of $V$ for the BA-RSMA formulation using the proposed alternating implementation. In Fig.~\ref{fig:eevsV_a}, increasing $V$ from $0.1$ to $100$ raises the average EE from $0.495$ to $1.659$~Mbit/J while reducing the minimum battery energy from $50.12$ to $47.69$~Wh; at $V=1000$, the EE remains essentially saturated at $1.659$~Mbit/J while the reserve falls slightly further to $47.62$~Wh. The average radiated power increases from $0.41$ to about $5.01$~W. Fig.~\ref{fig:eevsV_b} exposes the corresponding EE--reserve operating frontier: the move from $V=0.1$ to $V=10$ produces a large EE gain ($0.495$ to $1.643$~Mbit/J) for a moderate reserve reduction ($50.12$ to $48.17$~Wh), whereas the frontier nearly flattens by $V\approx100$. Hence, increasing $V$ beyond this point spends additional reserve for negligible EE improvement. No QoS outage or battery-floor violation occurs in the sweep, consistent with Remark~\ref{rem:V}.
\begin{figure}[t]
\centering

\begin{subfigure}{\columnwidth}
    \centering
    \includegraphics[width=\linewidth]{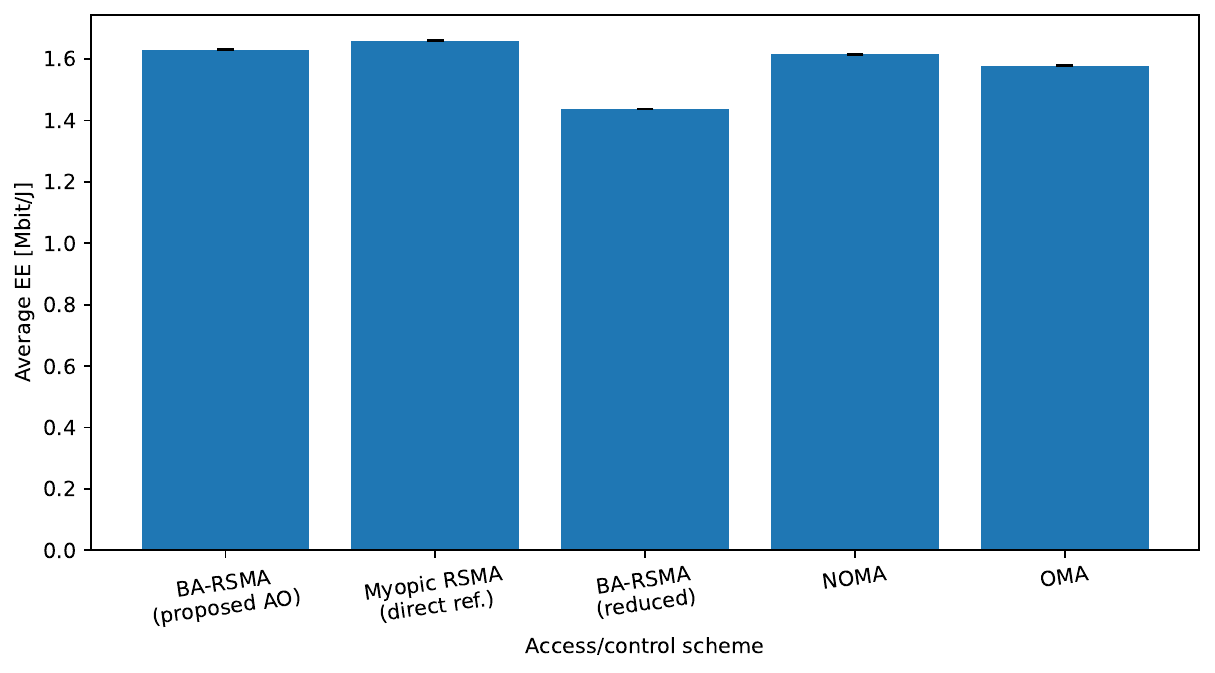}
    \caption{}
    \label{fig:policy_a}
\end{subfigure}


\begin{subfigure}{\columnwidth}
    \centering
    \includegraphics[width=\linewidth]{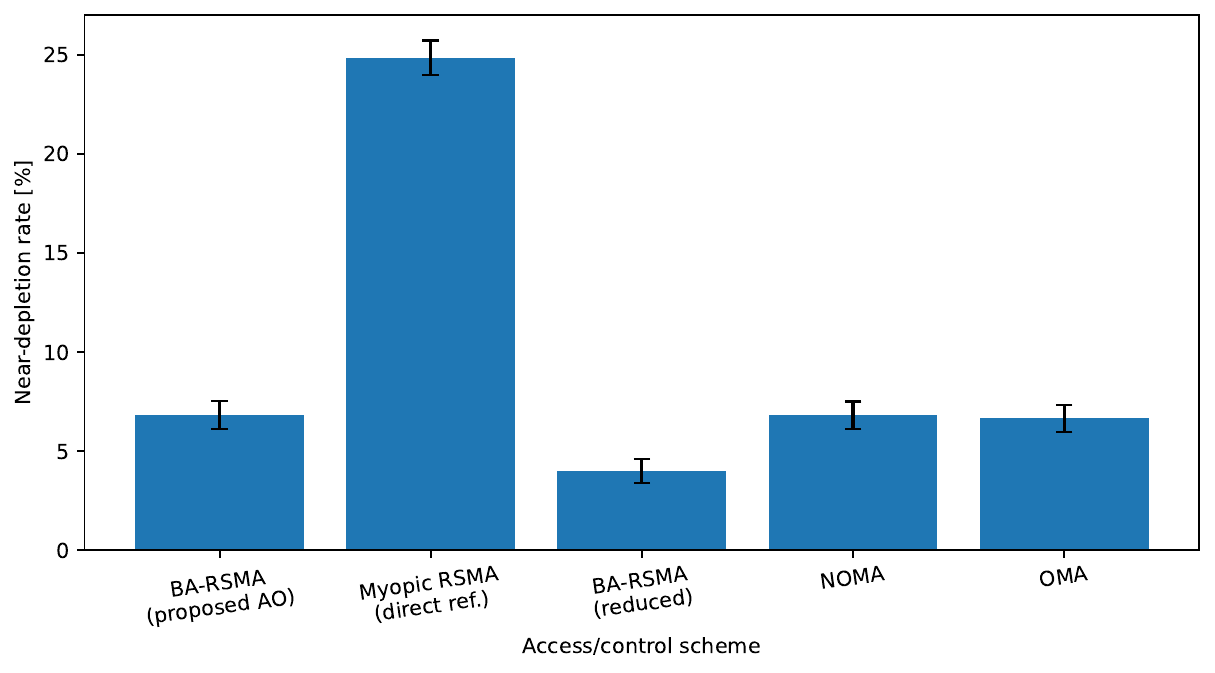}
    \caption{}
    \label{fig:policy_b}
\end{subfigure}
\caption{Battery-stressed comparison:
(a) average EE and (b) near-depletion rate
($E_i(0)=28$~Wh, $\widehat H_i=40$~W, $V=20$).}
\label{fig:policy}
\end{figure}

Fig.~\ref{fig:policy} reports a 20-seed battery-stressed Monte Carlo experiment with lower initial charge and weaker solar harvesting. Near depletion is defined as
\begin{equation}
R_{\rm dep}=\frac{1}{T}\sum_{t=0}^{T-1}
\mathbbm{1}\!\left\{\min_i E_i(t)\leq E_i^{\min}+0.1E_i^{\min}\right\}.
\end{equation}
All schemes remain QoS-feasible in all tested slots. The proposed-algorithm BA-RSMA solution attains $1.631$~Mbit/J with a mean near-depletion rate of $6.82\%$. The continuous Myopic-RSMA ablation, which removes the battery-queue term and is solved by the direct nonlinear-programming reference, attains the highest raw EE, $1.659$~Mbit/J, but increases near depletion to $24.84\%$. Thus, battery-aware control incurs only about a $1.7\%$ EE penalty while reducing near-depletion exposure by $18.02$ percentage points (about $72.5\%$ relative reduction). BA-RSMA also maintains a higher mean minimum battery state, $21.68$~Wh versus $20.99$~Wh.

The low-complexity reduced-order BA-RSMA achieves $1.437$~Mbit/J and the lowest near-depletion rate, $4.01\%$, illustrating a more conservative operating point within the RSMA family. NOMA and OMA achieve $1.616$ and $1.579$~Mbit/J, respectively, with mean near-depletion rates of $6.82\%$ and $6.67\%$. Hence, under the considered benchmark implementations, the proposed-algorithm BA-RSMA exceeds NOMA and OMA in EE by about $0.9\%$ and $3.3\%$, respectively, while providing essentially the same near-depletion exposure. This ordering is not claimed as a universal multiple-access ranking because NOMA and OMA use reduced-dimensional benchmark families.

\begin{figure}[t]
\centering
\includegraphics[width=\linewidth]{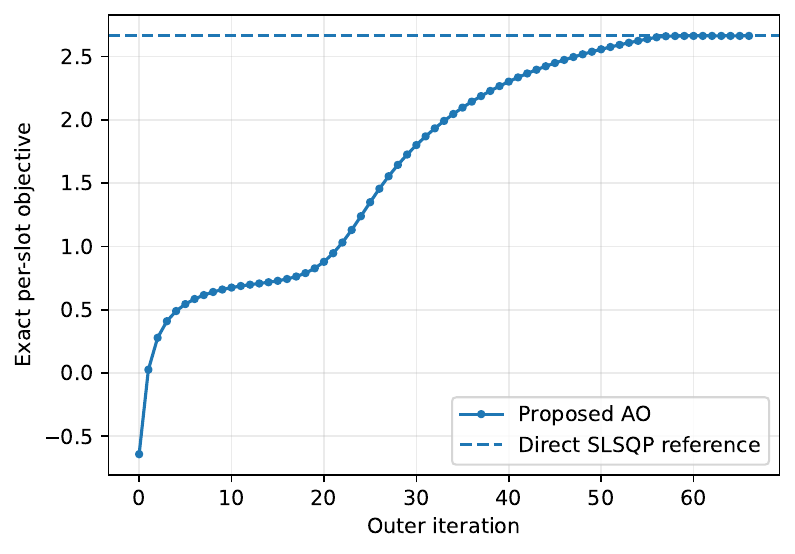}
\caption{Proposed alternating-solver convergence for a representative stressed slot; the dashed line denotes the direct SLSQP reference solution of \eqref{eq:p1}.}
\label{fig:convergence}
\end{figure}

Fig.~\ref{fig:convergence} illustrates the proposed dual-plus-quadratic-transform implementation for a representative stressed slot. The exact objective increases monotonically and approaches the direct SLSQP reference solution. Across the 12 cross-validation instances, every run is monotonic and feasible, with maximum relative objective gap $2.8\times10^{-7}$ and maximum residual $9.3\times10^{-9}$, numerically supporting Theorem~1.

\section{Conclusion}

This letter developed battery-aware one-layer RSMA for solar-powered cell-free LEO downlinks. A perturbed physical-battery queue couples RSMA power allocation to stored energy, while robust energy causality and harvest curtailment guarantee battery safety. The dual-plus-quadratic transform yields convex fixed-auxiliary subproblems with monotonic alternating updates and matches the direct nonlinear-programming reference to within $2.8\times10^{-7}$ in relative objective over 12 tested states. In the 20-seed stressed experiment, the proposed-algorithm BA-RSMA sacrifices only $1.7\%$ EE relative to continuous Myopic-RSMA while reducing near-depletion from $24.84\%$ to $6.82\%$. Under the considered benchmarks, it also exceeds NOMA and OMA in EE. Future work will consider dynamic clustering and geometry-based multi-layer RSMA.

\bibliographystyle{IEEEtran}
\bibliography{References}

\end{document}